\batchmode
\makeatletter
\def\input@path{{/Users/axelaraneda/Desktop/Research/fBM/}}
\makeatother
\documentclass[english]{article}
\usepackage{lmodern}
\usepackage[T1]{fontenc}
\usepackage[latin9]{inputenc}
\usepackage[a4paper]{geometry}
\geometry{verbose,tmargin=2.5cm,bmargin=2.5cm,lmargin=2.5cm,rmargin=2.5cm}
\usepackage{amsmath}
\usepackage{amsthm}
\usepackage{amssymb}
\usepackage{graphicx}

\makeatletter
\newcommand{\lyxaddress}[1]{
	\par {\raggedright #1
	\vspace{1.4em}
	\noindent\par}
}
\theoremstyle{remark}
\newtheorem{rem}{\protect\remarkname}[section]
\theoremstyle{plain}
\newtheorem{lem}{\protect\lemmaname}[section]
\theoremstyle{plain}
\newtheorem{thm}{\protect\theoremname}[section]
\theoremstyle{definition}
\newtheorem{defn}{\protect\definitionname}[section]
\theoremstyle{plain}
\newtheorem{cor}{\protect\corollaryname}[section]

\usepackage{hyperref}
\usepackage[numbers,sort&compress]{natbib}
\date{}

\usepackage{todonotes}

\@ifundefined{showcaptionsetup}{}{%
 \PassOptionsToPackage{caption=false}{subfig}}
\usepackage{subfig}
\makeatother

\usepackage{babel}
\providecommand{\corollaryname}{Corollary}
\providecommand{\definitionname}{Definition}
\providecommand{\lemmaname}{Lemma}
\providecommand{\remarkname}{Remark}
\providecommand{\theoremname}{Theorem}

\begin{document}
\title{\textbf{The sub-fractional CEV model }}
\author{Axel A.~Araneda\textsuperscript{a,}\thanks{Corresponding author. Email: \protect\href{mailto:axelaraneda@mail.muni.cz}{\texttt{axelaraneda@mail.muni.cz}}}
\and Nils Bertschinger\textsuperscript{b,c,}\thanks{Email: \protect\href{mailto:bertschinger@fias.uni-frankfurt.de}{\texttt{bertschinger@fias.uni-frankfurt.de}}}}
\maketitle

\lyxaddress{\begin{center}
\vspace{-2em}$\textsuperscript{a}$ Institute of Financial Complex
Systems \\Faculty of Economics and Administration\\ Masaryk University\\
602 00 Brno, Czech Republic.
\par\end{center}}

\lyxaddress{\begin{center}
\vspace{-2em}$\textsuperscript{b}$ Frankfurt Institute for Advanced
Studies\\60438 Frankfurt am Main, Germany.
\par\end{center}}

\lyxaddress{\begin{center}
\vspace{-2em}$\textsuperscript{c}$ Department of Computer Science\\
Goethe University\\ 60629 Frankfurt am Main, Germany.
\par\end{center}}

\begin{center}
\vspace{-1em} This version: March 24, 2021 \vspace{1.5em}
\par\end{center}
\begin{abstract}
The sub-fractional Brownian motion (sfBm) is a stochastic process,
characterized by non-stationarity in their increments and long-range
dependency, considered as an intermediate step between the standard
Brownian motion (Bm) and the fractional Brownian motion (fBm). The
mixed process, a linear combination between a Bm and an independent
sfBm, called mixed sub-fractional Brownian motion (msfBm), keeps the
features of the sfBm adding the semi-martingale property for $H>3/4$,
is a suitable candidate to use in price fluctuation modeling, in particular
for option pricing. In this note, we arrive at the European Call price
under the Constant Elasticity of Variance (CEV) model driven by a
mixed sub-fractional Brownian motion. Empirical tests show the capacity
of the proposed model to capture the temporal structure of option
prices across different maturities.

\textbf{\textit{Keywords}}: sub-fractional Brownian motion; CEV model;
option pricing; sub-fractional Fokker-Planck; Long-range dependence;
Econophysics.

\vspace{2em}
\end{abstract}

\section{Introduction}

The sub-fractional Brownian motion, in short sfBm, is a stochastic
process that emerges from the occupation time fluctuations of branching
particle systems\footnote{It also arises in systems without branching \cite{bojdecki2006limit,bojdecki2010particle}.
Besides, sfBm appears independently as an expansion of the fBm \cite{dzhaparidze2004series}.} \cite{bojdecki2004sub}. It owns the main properties of the fractional
Brownian motion\footnote{A fBm, $B^{H}=\left\{ B_{t}^{H},t\geq0\right\} $, is a Gaussian process
with $0<H<1$ , characterized by:

\quad{}i) $B_{0}^{H}=0$,

\quad{}ii) $\mathbb{E}\left(B_{t}^{H}\right)=0$,

\quad{}iii) $\mathbb{E}\left(B_{t}^{H}\cdot B_{s}^{H}\right)=\frac{1}{2}\left\{ \left|t\right|^{2H}+\left|s\right|^{2H}-\left|t-s\right|^{2H}\right\} $
.} (fBm) \cite{mandelbrot1968fractional} as long-range dependence,
self-similarity and Holder paths. However, a key difference among
them is the sfBm has non-stationary increments. Besides, the sfBm
has more weakly correlated increments and their covariance decays
at a higher rate, in comparison to the fBm. See \cite{bojdecki2004sub,tudor2007some,mishura2018stochastic}
for details and properties of the sfBm.

Then, diffusion processes under sfBm could be considered as a potential
candidate to model some financial time-series which exhibit long-range
dependency and non-stationarity increments \cite{bassler2007nonstationary,morales2013non,costa2003long,baldovin2011modeling}.
However, the option pricing under sfBm leads to arbitrage opportunities
\cite{zhang2017arbitrage}. In this context, the mixed sub-fractional
Brownian motion (msfBm) \cite{charles2015sub}, a linear combination
between a standard Brownian motion (Bm) and an independent sfBm, which
holds the main properties of the latter but adds the semi-martingale
condition when the Hurst exponent $H\in(3/4,1)$; emerges as a suitable
alternative for financial modeling.

Previously, some attempts have been addressed in the literature to
include sub-fractional diffusion in the price fluctuation modeling
\cite{tudor2008sub2,liu2010sub,xu2018pricing,xu2019pricing}; mainly,
using the seminal Black-Scholes (B-S) model as a platform \cite{black1973pricing}.
However, in this communication, we will consider a sub-fractional
extension to the Constant Elasticity of Variance (CEV) model \cite{Cox1975,cox1996constant},
which is capable to address some shortcomings of the B-S approach
as the leverage effect and the implied volatility skew \cite{eltit}. 

Following the procedure given in \cite{araneda2020fractional} to
the fractional and mixed-fractional case, we derive the Fokker-Planck
equation under sub-fractional diffusion and the transition probability
density function for the mixed sub-fractional CEV (msfCEV) is obtained,
leading to the price formula for an European Call option in terms
of the non-central chi-squared distribution and the M-Whittaker function. 

The outline of the paper is the following. First, some properties
and details of the dfBm are addressed, including their stochastic
calculus rules and the Fokker-Planck equation. Later, we put the accent
on the msfBm. At next, the option pricing under the standard CEV model
is reviewed. In section \ref{sec:The-msfCEV-model}, the principal
results for the msfCEV option pricing model are presented. At next,
some numerical results using real option pricing data are delivered.
Finally, the main conclusions are displayed.

\section{On the sub-fractional Brownian motion}

The sub-fractional Brownian motion is centered Gaussian process $\xi^{H}=\left\{ \xi_{t}^{H},t\geq0\right\} $
, with $\xi_{0}^{H}=0$ and covariance:

\begin{equation}
\mathbb{E}\left(\xi_{t}^{H}\cdot\xi_{s}^{H}\right)=s^{2H}+t^{2H}-\frac{1}{2}\left[\left(s+t\right)^{2H}+\left|t-s\right|^{2H}\right]\label{eq:cov}
\end{equation}

\noindent being $H\in\left]0,1\right[$ the Hurst index. For $H=1/2,$
$\xi^{H}$ becomes a standard Brownian motion.

We highlight two main properties of the sfBm derived from Eq. (\ref{eq:cov}),
in the following remarks.
\begin{rem}[Correlation of the increments for sfBm]
Let $t>s\geq v>u\geq0.$ The covariance over non-overlapping increments
is given by:

\begin{eqnarray*}
\mathbb{E}\left[\left(\xi_{v}^{H}-\xi_{u}^{H}\right)\cdot\left(\xi_{t}^{H}-\xi_{s}^{H}\right)\right] & = & \frac{1}{2}\left[\left(t+u\right)^{2H}+\left(t-u\right)^{2H}+\left(s+v\right)^{2H}+\left(s-v\right)^{2H}\right.\\
 &  & \left.-\left(t+v\right)^{2H}-\left(t-v\right)^{2H}-\left(s+u\right)^{2H}-\left(s-u\right)^{2H}\right]
\end{eqnarray*}

Thus, $\mathbb{E}\left[\left(\xi_{v}^{H}-\xi_{u}^{H}\right)\cdot\left(\xi_{t}^{H}-\xi_{s}^{H}\right)\right]>0$
(resp. <0) for $H>\frac{1}{2}$ (resp. $H<\frac{1}{2}$), which implies
positive (resp. negative) correlated increments.\\
\end{rem}
\begin{rem}[The sfBM has non-stationarity increments]
\label{rem:non-stationarity}The variance for a general increment
is given by:

\[
\mathbb{E}\left[\left|\xi_{t}^{H}-\xi_{s}^{H}\right|^{2}\right]=-2^{2H-1}\left(s^{2H}+t^{2H}\right)+\left(s+t\right)^{2H}-\left(t-s\right)^{2H},\qquad0\leq s<t
\]

Then, for any $u\in\mathbb{\mathbb{R^{+}}}$; is clear that: 

\[
\mathbb{E}\left[\left|\xi_{t}^{H}-\xi_{s}^{H}\right|^{2}\right]\neq\mathbb{E}\left[\left|\xi_{t+u}^{H}-\xi_{s+u}^{H}\right|^{2}\right]
\]
\\
\end{rem}
On the other hand, It's well known that $\xi^{H}$ is neither a Markov
process nor a semi-martingale for $H\neq1/2$. So, the classical Ito
rules are not available for sfBm with $H\neq1/2$. Since that, the
following results are useful for our purposes.
\begin{lem}[It\^o\textquoteright s lemma for a sub-fractional Brownian motion]
\label{lem: ito} Let $f=f(\xi_{t}^{H})\in C^{2}(\mathbb{R})$ and
$\frac{1}{2}\leq H<1$. Then:

\[
\text{d}f=f'(\xi_{t}^{H})\text{d}\xi_{t}^{H}+Ht^{2H-1}\left(2-2^{2H-1}\right)f''(\xi_{t}^{H})\text{d}t
\]
\end{lem}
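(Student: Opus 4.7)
The plan is to follow the derivation of the classical It\^o formula: Taylor-expand $f(\xi_t^H)$ along a partition of $[0,t]$ and identify the limiting ``drift'' contribution produced by the second-order increments. The guiding observation is that the coefficient $Ht^{2H-1}(2-2^{2H-1})$ in the statement is nothing other than $\tfrac{1}{2}\tfrac{d}{dt}\mathrm{Var}(\xi_t^H)$, so the lemma should read, morally, $df = f'(\xi_t^H)\,d\xi_t^H + \tfrac{1}{2}f''(\xi_t^H)\,d[\mathrm{Var}(\xi_t^H)]$, exactly what one expects for a Gaussian driver whose variance function plays the role of the classical quadratic variation.

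First I would make this correspondence precise algebraically. Setting $s=t$ in the covariance (\ref{eq:cov}) yields
\[
\mathrm{Var}(\xi_t^H) = 2t^{2H} - \tfrac{1}{2}(2t)^{2H} = (2-2^{2H-1})\,t^{2H},
\]
whose time-derivative is $2H(2-2^{2H-1})\,t^{2H-1}$, i.e.\ twice the coefficient in the lemma. Equivalently, expanding the increment variance from Remark \ref{rem:non-stationarity} around $s=t$ gives, for small $h>0$ and $H>\tfrac12$,
\[
\mathbb{E}\bigl[(\xi_{t+h}^H-\xi_t^H)^2\bigr] = 2H(2-2^{2H-1})\,t^{2H-1}\,h + o(h),
\]
since the $h^{2H}$ contribution is $o(h)$ in that range.

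Next I would partition $[0,t]$ and expand $f(\xi_{t_{i+1}}^H)-f(\xi_{t_i}^H)$ to second order. The first-order sum defines the stochastic integral $\int_0^t f'(\xi_s^H)\,d\xi_s^H$ in a Wick/Skorohod sense (reducing to the ordinary It\^o integral when $H=1/2$). The second-order sum is handled via Gaussianity: write $(\xi_{t_{i+1}}^H-\xi_{t_i}^H)^2$ as its expectation plus a mean-zero fluctuation. The deterministic diagonal contribution is a Riemann sum converging, thanks to the expansion above, to $\int_0^t H(2-2^{2H-1})\,s^{2H-1}f''(\xi_s^H)\,ds$, while the fluctuations vanish in $L^2$ as the mesh shrinks, using the decay of the sfBm covariance together with the $C^2$ regularity of $f$.

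The main difficulty is that for $H>1/2$ the process $\xi^H$ has vanishing classical quadratic variation and is not a semimartingale, so the second-order sum cannot be controlled by a pathwise It\^o-type argument. I would address this inside the Wick--Malliavin framework: decompose $f$ into Hermite polynomials of $\xi_t^H/\sqrt{\mathrm{Var}(\xi_t^H)}$, apply the Wick chain rule to each mode (where the identity is exact), and reassemble. The Volterra representation of $\xi^H$ with respect to a standard Brownian motion then furnishes the bounds needed to justify the limit. Once this integration framework is fixed, the precise coefficient $H(2-2^{2H-1})\,t^{2H-1}$ drops out automatically from the variance computation performed in the first step.
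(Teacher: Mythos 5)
The paper does not actually prove this lemma --- it simply cites Yan et al.\ --- so your attempt is necessarily a different route; unfortunately it contains a genuine gap. The central quantitative claim, that
\[
\mathbb{E}\bigl[(\xi_{t+h}^{H}-\xi_{t}^{H})^{2}\bigr]=2H\bigl(2-2^{2H-1}\bigr)t^{2H-1}h+o(h),
\]
is false. Computing the increment variance directly from the covariance (\ref{eq:cov}) gives $-2^{2H-1}\bigl(t^{2H}+(t+h)^{2H}\bigr)+(2t+h)^{2H}+h^{2H}$, and when you Taylor-expand in $h$ the linear terms cancel exactly, leaving $h^{2H}+O(h^{2})$. (Incidentally, Remark \ref{rem:non-stationarity} of the paper carries a sign typo on the $(t-s)^{2H}$ term, but the cancellation happens either way.) So for $H>\tfrac12$ the increment variance is $o(h)$ --- which is precisely why, as you yourself note, the classical quadratic variation of $\xi^{H}$ vanishes. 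Consequently the ``deterministic diagonal contribution'' $\tfrac12\sum_i f''(\xi_{t_i}^{H})\,\mathbb{E}[(\Delta\xi_i^{H})^{2}]$ is a Riemann sum of terms of size $(\Delta t)^{2H}$ and converges to zero, not to $\int_0^t H(2-2^{2H-1})s^{2H-1}f''(\xi_s^H)\,\mathrm{d}s$. The Taylor-expansion mechanism in the middle of your argument therefore cannot produce the stated drift term, and it contradicts the (correct) observation in your final paragraph.

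The numerical identity you start from --- that $Ht^{2H-1}(2-2^{2H-1})=\tfrac12\frac{\mathrm{d}}{\mathrm{d}t}\mathrm{Var}(\xi_t^H)$ --- is right, but the drift does not arise from squared increments; the increment variance and the derivative of the one-time variance differ here exactly because the increments are not orthogonal. The correction actually arises from the Wick/Skorohod side: for $H>\tfrac12$ the forward Riemann sums $\sum_i f'(\xi_{t_i}^H)\Delta\xi_i^H$ converge to the pathwise (Young-type) integral, which obeys the ordinary chain rule with no second-order term, and the It\^o-type formula of the lemma is obtained by rewriting that pathwise integral as a divergence (Skorohod) integral plus a Malliavin-trace term; it is this trace term that equals $\tfrac12 f''(\xi_t^H)\,\mathrm{d}\,\mathrm{Var}(\xi_t^H)$. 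Your last paragraph gestures at this framework, but it is not an optional technical patch --- it is the entire content of the proof, and the Hermite/Wick computation would have to be carried out explicitly (as in the reference the paper relies on) rather than asserted. As written, the proposal attaches the right coefficient to the wrong mechanism.
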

\begin{proof}
See \cite{yan2011ito}.
\end{proof}
\begin{thm}
\label{thm:The-Fokker-Planck-equation}The Fokker-Planck equation
related to the generic process 

\begin{equation}
\text{d}y_{t}=\mu\left(y_{t},t\right)\text{d}t+\sigma\left(y_{t},t\right)\text{d\ensuremath{\xi_{t}^{H}}}\label{eq:dy}
\end{equation}

is given by:

\begin{equation}
\frac{\partial P}{\partial t}=Ht^{2H-1}\left(2-2^{2H-1}\right)\frac{\partial\left(\sigma^{2}P\right)}{\partial y^{2}}-\frac{\partial\left(\mu P\right)}{\partial y}\label{eq:FPE}
\end{equation}
\\
\end{thm}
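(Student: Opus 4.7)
The plan is to derive the Fokker--Planck equation by a test-function / dual argument, in the spirit of the classical Kolmogorov derivation, using the It\^o formula of Lemma \ref{lem: ito} as the main tool to handle the quadratic-variation-like correction arising from the sfBm driver.

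First, I would take an arbitrary test function $f\in C_{c}^{2}(\mathbb{R})$ and consider $f(y_{t})$. A direct application of Lemma \ref{lem: ito}, extended via the chain rule to a generic smooth function of $y_{t}$ (justified by the fact that the quadratic-variation correction involves only the local diffusion coefficient $\sigma^{2}$), yields
\begin{equation*}
\mathrm{d}f(y_{t}) = f'(y_{t})\,\mathrm{d}y_{t} + Ht^{2H-1}\bigl(2-2^{2H-1}\bigr)\sigma^{2}(y_{t},t)\,f''(y_{t})\,\mathrm{d}t .
\end{equation*}
Substituting the SDE \eqref{eq:dy} produces
\begin{equation*}
\mathrm{d}f(y_{t}) = \Bigl[\mu(y_{t},t)f'(y_{t}) + Ht^{2H-1}\bigl(2-2^{2H-1}\bigr)\sigma^{2}(y_{t},t)f''(y_{t})\Bigr]\mathrm{d}t + \sigma(y_{t},t)f'(y_{t})\,\mathrm{d}\xi_{t}^{H}.
\end{equation*}

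Next I would take expectations. Assuming the sub-fractional stochastic integral of $\sigma f'$ has zero mean (which is the standard assumption in this framework, analogous to the vanishing expectation of the It\^o integral in the Brownian case), this reduces to
\begin{equation*}
\frac{\mathrm{d}}{\mathrm{d}t}\mathbb{E}\bigl[f(y_{t})\bigr] = \mathbb{E}\bigl[\mu\, f'(y_{t})\bigr] + Ht^{2H-1}\bigl(2-2^{2H-1}\bigr)\mathbb{E}\bigl[\sigma^{2}f''(y_{t})\bigr].
\end{equation*}
Writing each expectation as an integral against the density $P(y,t)$ and performing integration by parts (the boundary terms vanish because $f$ has compact support) gives
\begin{equation*}
\int f(y)\,\frac{\partial P}{\partial t}\,\mathrm{d}y = \int f(y)\left[-\frac{\partial(\mu P)}{\partial y} + Ht^{2H-1}\bigl(2-2^{2H-1}\bigr)\frac{\partial^{2}(\sigma^{2}P)}{\partial y^{2}}\right]\mathrm{d}y.
\end{equation*}

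Since $f$ was arbitrary in $C_{c}^{2}(\mathbb{R})$, a standard density argument yields \eqref{eq:FPE} in the weak sense, and smoothness of the coefficients promotes it to the pointwise form stated.

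The main obstacle I anticipate is the first step: Lemma \ref{lem: ito} is stated only for $f(\xi_{t}^{H})$, so one must justify the chain-rule extension to $f(y_{t})$ where $y_{t}$ itself is an sfBm-driven process. Because sfBm is not a semimartingale for $H\neq 1/2$, this cannot be done via the classical It\^o machinery, and one typically invokes the Malliavin/divergence-type calculus for sfBm or a pathwise (Young/rough-path) argument ensuring that the correction term depends only on $\sigma^{2}$ and the same deterministic factor $Ht^{2H-1}(2-2^{2H-1})$. A secondary technical point is guaranteeing that the sfBm integral has zero expectation; this is consistent with treating $\mathrm{d}\xi_{t}^{H}$ in the divergence (Skorokhod) sense, as is customary in the sub-fractional option-pricing literature cited in the introduction.
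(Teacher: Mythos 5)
Your proposal follows essentially the same route as the paper's own proof (which adapts \"Unal's derivation for the fractional case): apply the sub-fractional It\^o formula to a smooth test function of $y_{t}$, take expectations, rewrite them as integrals against the density $P$, integrate by parts, and conclude from the arbitrariness of the test function. If anything you are more careful than the paper, which silently applies Lemma \ref{lem: ito} (stated only for $f(\xi_{t}^{H})$) to a function of the sfBm-driven process $y_{t}$ and drops the expectation of the $\mathrm{d}\xi_{t}^{H}$ integral without comment -- the two technical caveats you flag are genuine gaps in the paper's argument as well.
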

\begin{proof}
We will follow and extend the procedure given in \cite{unal2007fokker}
for diffusion processes under fractional Brownian motion, but this
time applied to sub-fractional case.

Let $g=g\left(y_{t}\right)$ a twice differentiable scalar function.
Using the It\^o formula for sfBm (Lemma \ref{lem: ito}), we have:

\begin{equation}
\text{d}g=\left[\mu\frac{\partial g}{\partial y}+Ht^{2H-1}\left(2-2^{2H-1}\right)\sigma^{2}\frac{\partial^{2}g}{\partial y^{2}}\right]\mathrm{d}t+\sigma\frac{\partial h}{\partial y}\mathrm{d}\xi_{t}^{H}\label{eq:dh}
\end{equation}
\\

Then, taking expectations over (\ref{eq:dh}):

\begin{equation}
\frac{\text{d}\mathbb{E}\left(g\right)}{\mathrm{d}t}=\mathbb{E}\left(\mu\frac{\partial g}{\partial y}\right)+\mathbb{E}\left[Ht^{2H-1}\left(2-2^{2H-1}\right)\sigma^{2}\frac{\partial^{2}g}{\partial y^{2}}\right]\label{eq:Edxdt}
\end{equation}
\\

Since the expectation of $g(y_{t})$ is:

\begin{equation}
\mathbb{E}\left[g\left(y_{t}\right)\right]=\int g\left(y\right)P\left(y,t\right)\text{d}y\label{eq:Eh}
\end{equation}
\\

\noindent where $P$ is the transition probability density function
at time $t$; the relations (\ref{eq:Edxdt}) and (\ref{eq:Eh}) yields
to:

\begin{equation}
\int_{-\infty}^{\infty}g\frac{\partial P}{\partial t}\text{d}y=\int_{-\infty}^{\infty}\left[\mu\frac{\partial g}{\partial y}+Ht^{2H-1}\left(2-2^{2H-1}\right)\sigma^{2}\frac{\partial^{2}g}{\partial y^{2}}\right]P\text{d}y\label{eq:integral-FP}
\end{equation}
\\

After that, using the following results:

\[
\int_{-\infty}^{\infty}\mu\frac{\partial g}{\partial y}P\text{d}y=-\int_{-\infty}^{\infty}g\frac{\partial\left(\mu P\right)}{\partial y}\text{d}y
\]

\[
\int_{-\infty}^{\infty}\sigma^{2}\frac{\partial^{2}g}{\partial y^{2}}P\text{d}y=\int_{-\infty}^{\infty}g\frac{\partial\left(\sigma^{2}P\right)}{\partial y^{2}}\text{d}y
\]
\\

\noindent the Eq. (\ref{eq:integral-FP}) goes to:

\begin{equation}
\int_{-\infty}^{\infty}h\left[\frac{\partial P}{\partial t}+\frac{\partial\left(\mu P\right)}{\partial y}-Ht^{2H-1}\left(2-2^{2H-1}\right)\frac{\partial\left(\sigma^{2}P\right)}{\partial y^{2}}\right]\text{d}x=0\label{eq:infp}
\end{equation}
\\

Finally, the Fokker-Planck equation related to the process (\ref{eq:dy}),
emerges from (\ref{eq:infp}).
\end{proof}

\section{On the mixed Sub-fractional Brownian motion}

Similarly to approach followed for Cheridito \cite{cheridito2001mixed}
to the construction of a mixed fractional Brownian motion, a msfBm
was introduced \cite{tudor2007some,charles2015sub} as a linear combination
among a sfBm and an ordinary \& independent Brownian motion ($B_{t}$).
\begin{defn}
Let $M^{\beta,\gamma,H}=\left\{ M_{t}^{\beta,\gamma,H},t\geq0\right\} $
a msfBm, defined by:

\begin{equation}
M_{t}^{\beta,\gamma,H}=\beta B_{t}+\gamma\xi_{t}^{H};\qquad\beta\geq0,\,\gamma\geq0,\label{eq:M}
\end{equation}
\end{defn}
From \ref{eq:M} is clear that $M_{t}^{0,1,H}$ becomes a sfBM; while
$M_{t}^{1,0,H}$ and $M_{t}^{0,1,1/2}$ represents a standard Bm.
Besides, from Eq. (\ref{eq:cov}), we have:

\begin{equation}
\mathbb{E}\left(M_{t}^{\beta,\gamma,H}\cdot M_{s}^{\beta,\gamma,H}\right)=\beta^{2}\min\left(s,t\right)+\gamma^{2}\left\{ s^{2H}+t^{2H}-\frac{1}{2}\left[\left(s+t\right)^{2H}+\left|t-s\right|^{2H}\right]\right\} \label{eq:covarM}
\end{equation}

As in the previous section, the correlation over non-overlapping increments
and the non-stationarity increments property are addressed at next.
\begin{rem}[Correlation of the increments for msfBm]
\label{rem:correlation}Let $t>s\geq v>u\geq0.$ The covariance over
non-overlapping increments, for the process $M^{\beta,\gamma,H}$
is given by:

\begin{eqnarray*}
\mathbb{E}\left[\left(M_{v}^{\beta,\gamma,H}-M_{u}^{\beta,\gamma,H}\right)\cdot\left(M_{t}^{\beta,\gamma,H}-M_{s}^{\beta,\gamma,H}\right)\right] & = & \frac{\gamma^{2}}{2}\left[\left(t+u\right)^{2H}+\left(t-u\right)^{2H}+\left(s+v\right)^{2H}+\left(s-v\right)^{2H}\right.\\
 &  & \left.-\left(t+v\right)^{2H}-\left(t-v\right)^{2H}-\left(s+u\right)^{2H}-\left(s-u\right)^{2H}\right]
\end{eqnarray*}

The RHS of the above equation is positive for $H>\frac{1}{2}$, which
indicates that the increment are correlated on that region. Alternatively,
for $H<\frac{1}{2}$ the increments are negative correlated.\\
\end{rem}
\begin{rem}[The msfBM has non-stationarity increments]
\label{rem:non-stationarity-M}. Let $\left(s,t,u\right)\in\mathbb{R}_{+}^{3}$
and $t>s$. Then:

\[
\mathbb{E}\left[\left|M_{t}^{\beta,\gamma,H}-M_{s}^{\beta,\gamma,H}\right|^{2}\right]=\beta^{2}\left(t-s\right)+\gamma^{2}\left[-2^{2H-1}\left(s^{2H}+t^{2H}\right)+\left(s+t\right)^{2H}-\left(t-s\right)^{2H}\right]
\]

Thus, is easy to check the non-stationarity property for the increments
by:

\[
\mathbb{E}\left[\left|M_{t}^{\beta,\gamma,H}-M_{s}^{\beta,\gamma,H}\right|^{2}\right]\neq\mathbb{E}\left[\left|M_{t+u}^{\beta,\gamma,H}-M_{s+u}^{\beta,\gamma,H}\right|^{2}\right]
\]
\\
\end{rem}
Another property, key for the msfBm process (\ref{eq:M}), and especially
useful in mathematical finance, is the semi-martingale property detailed
in the following statement.
\begin{lem}[Semi-martingale property for the msfBm]
\label{lem:-The-mixed} The mixed process $M_{t}^{\beta,\gamma,H}$,
with $\beta\neq0$ and $\frac{3}{4}<H<1$, is a semi-martingale equivalent
in law to $\beta\times B_{t}$.
\end{lem}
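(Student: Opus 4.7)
The plan is to adapt Cheridito's argument for the mixed fractional Brownian motion to the sub-fractional setting. First I would show that the law of $M^{\beta,\gamma,H}$ on $C[0,T]$ is mutually absolutely continuous with respect to the Wiener measure of $\beta B$. Once this equivalence in law is in hand, the semi-martingale property transfers directly: semi-martingales are preserved under locally equivalent changes of probability (Stricker/Jacod), and $\beta B$ is trivially a semi-martingale, so $M^{\beta,\gamma,H}$ inherits the property as well.

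To prove the equivalence I would invoke the Kakutani--Feldman--H\'ajek dichotomy together with Shepp's criterion: two centered Gaussian measures on $C[0,T]$ are either equivalent or mutually singular, and the perturbation of Wiener measure by an \emph{independent} centered Gaussian with covariance $R$ remains equivalent to Wiener measure precisely when $R$ can be realized as the integral kernel of a Hilbert--Schmidt (indeed trace-class) operator on $L^{2}[0,T]$. Concretely, with $R(s,t)=\gamma^{2}C_{H}(s,t)$, where
\[
C_{H}(s,t)=s^{2H}+t^{2H}-\tfrac{1}{2}\bigl[(s+t)^{2H}+|t-s|^{2H}\bigr],
\]
the sufficient regularity reduces to the mixed partial $\partial_{s}\partial_{t}C_{H}(s,t)$ belonging to $L^{2}([0,T]^{2})$.

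A direct computation gives
\[
\partial_{s}\partial_{t}C_{H}(s,t)=-H(2H-1)\bigl[(s+t)^{2H-2}-|t-s|^{2H-2}\bigr].
\]
The first term $(s+t)^{2H-2}$ is bounded away from the corner $(0,0)$ and locally integrable there, so it contributes no obstruction. The near-diagonal singularity $|t-s|^{2H-2}$ is square-integrable on $[0,T]^{2}$ iff $2(2H-2)>-1$, i.e.\ iff $H>3/4$, which is exactly the hypothesis of the lemma. Combining everything, for $H\in(3/4,1)$ the law of $M^{\beta,\gamma,H}$ is equivalent to that of $\beta B$, establishing both the equivalence in law and the semi-martingale property.

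The main obstacle, compared with the mixed-fBm case treated by Cheridito, is accounting for the extra non-stationary term $(s+t)^{2H}$ in the sfBm covariance; one must verify that this additional term is \emph{more} regular than the $|t-s|^{2H}$ piece and therefore does not shift the critical threshold. The computation above shows this is indeed the case, so the same cutoff $H=3/4$ arises as in the mfBm setting, and no finer analysis of the non-stationarity is required.
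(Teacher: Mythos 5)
Your argument is correct and is essentially the standard proof of this fact: the paper itself offers no proof, deferring to the cited references, and those references establish the result by exactly the route you describe — Shepp's criterion applied to the independent Gaussian perturbation, with the near-diagonal singularity $|t-s|^{2(2H-2)}$ producing the threshold $H>3/4$, followed by preservation of the semi-martingale property under equivalent changes of law. The only nitpicks are that Hilbert--Schmidt (not trace-class) is the relevant condition, and that one should note the covariance $C_{H}$ really is the double integral of its mixed partial (no singular part on the diagonal, which holds since $H>1/2$); neither affects the validity of your proof.
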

\begin{proof}
cf. \cite{tudor2007some,charles2015sub}.\\

The results provided in the Remarks \ref{rem:correlation}-\ref{rem:non-stationarity-M}
and the Lemma \ref{lem:-The-mixed}, makes the stochastic processes
driven by msfBm suitable for the option pricing modeling of time series
which presents non-stationarity and autocorrelation. In addition,
the following auxiliary results are presented.
\end{proof}
\begin{cor}
\label{cor:Let--a}Let $y$ a stochastic process ruled by:

\begin{equation}
\text{d}y_{t}=\mu(y_{t},t)\text{d}t+\sigma(y_{t},t)\text{d}M_{t}^{\beta,\gamma,H}\label{eq:dytilde}
\end{equation}

\noindent  and $h=h(y_{t})\in C^{2}(\mathbb{R})$. Then, we have:

\begin{eqnarray*}
dh & = & \left\{ \mu\frac{\partial h}{\partial y}+\left[\frac{1}{2}\beta^{2}+\gamma^{2}Ht^{2H-1}\left(2-2^{2H-1}\right)\right]\sigma^{2}\frac{\partial^{2}h}{\partial y^{2}}\right\} \mathrm{d}t+\sigma\frac{\partial h}{\partial y}\left(\beta\mathrm{d}B_{t}+\gamma\mathrm{d}\xi_{t}^{H}\right)\\
 & = & \left\{ \mu\frac{\partial h}{\partial y}+\left[\frac{1}{2}\beta^{2}+\gamma^{2}Ht^{2H-1}\left(2-2^{2H-1}\right)\right]\sigma^{2}\frac{\partial^{2}h}{\partial y^{2}}\right\} \mathrm{d}t+\sigma\frac{\partial h}{\partial y}\mathrm{d}M_{t}^{\beta,\gamma,H}
\end{eqnarray*}
\end{cor}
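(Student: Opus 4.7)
The plan is to reduce Corollary \ref{cor:Let--a} to a direct superposition of the classical It\^o formula (for the Brownian component of $M^{\beta,\gamma,H}$) and Lemma \ref{lem: ito} (for the sub-fractional component), exploiting the independence of $B$ and $\xi^H$. First I would use the definition \eqref{eq:M} to rewrite the dynamics \eqref{eq:dytilde} as
\[
\mathrm{d}y_t=\mu(y_t,t)\,\mathrm{d}t+\beta\sigma(y_t,t)\,\mathrm{d}B_t+\gamma\sigma(y_t,t)\,\mathrm{d}\xi_t^H,
\]
so that the noise splits into two independent pieces with effective diffusion coefficients $\beta\sigma$ and $\gamma\sigma$.

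Next I would expand $\mathrm{d}h(y_t)$ to second order and read off the three types of contributions. The drift piece yields $\mu\, h'(y_t)\,\mathrm{d}t$, and the two stochastic integrals inherit a factor $h'(y_t)$ immediately by the chain rule, producing $\sigma h'(y_t)\bigl(\beta\,\mathrm{d}B_t+\gamma\,\mathrm{d}\xi_t^H\bigr)$. For the quadratic correction, the Brownian piece contributes $\tfrac{1}{2}\beta^2\sigma^2 h''(y_t)\,\mathrm{d}t$ via $(\mathrm{d}B_t)^2=\mathrm{d}t$, while Lemma \ref{lem: ito}, applied to the $\xi^H$-driven piece, contributes $\gamma^2 H t^{2H-1}(2-2^{2H-1})\sigma^2 h''(y_t)\,\mathrm{d}t$. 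Adding these and regrouping $\beta\,\mathrm{d}B_t+\gamma\,\mathrm{d}\xi_t^H=\mathrm{d}M_t^{\beta,\gamma,H}$ gives both displayed equalities in the statement.

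The main obstacle I expect is the careful justification of two interrelated points. First, Lemma \ref{lem: ito} is stated only for $f(\xi_t^H)$ without a state-dependent coefficient in front of $\mathrm{d}\xi_t^H$, so its application to the integrand $\gamma\sigma(y_t,t)\,\mathrm{d}\xi_t^H$ needs a localisation/approximation by simple processes of the type used in \cite{yan2011ito} (implicitly the same extension is already in force in Theorem \ref{thm:The-Fokker-Planck-equation}, where $\sigma$ is likewise state-dependent). Second, one must argue that no mixed term of the form $\beta\gamma\sigma^2 h''\,\mathrm{d}[B,\xi^H]_t$ appears; since $B$ and $\xi^H$ are independent centred Gaussian processes, their cross-covariation vanishes identically, so the two quadratic contributions superpose additively and produce exactly the bracket $\bigl[\tfrac{1}{2}\beta^2+\gamma^2 H t^{2H-1}(2-2^{2H-1})\bigr]\sigma^2$ shown in the corollary.
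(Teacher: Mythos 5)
Your proposal is correct and follows essentially the same route as the paper, whose entire proof is the single line ``by direct consequence of Lemma \ref{lem: ito}'': you simply make explicit the decomposition $\mathrm{d}M_t^{\beta,\gamma,H}=\beta\,\mathrm{d}B_t+\gamma\,\mathrm{d}\xi_t^H$, the superposition of the classical It\^o correction with the sub-fractional one, and the vanishing of the cross term by independence. The technical caveats you raise (extending Lemma \ref{lem: ito} to state-dependent integrands, and justifying the absence of a cross-variation term) are genuine gaps that the paper itself glosses over, so flagging them is appropriate rather than a deviation.
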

\begin{proof}
By direct consequence of the Lemma \ref{lem: ito}.
\end{proof}
\begin{cor}
\label{cor:The-Fokker-Planck-equation}The Fokker-Planck equation
for the drift mixed sub-fractional process described by the SDE (\ref{eq:dytilde})
is:

\[
\frac{\partial P}{\partial t}=\left[\frac{1}{2}\beta^{2}+\gamma^{2}Ht^{2H-1}\left(2-2^{2H-1}\right)\right]\frac{\partial\left(\sigma^{2}P\right)}{\partial y^{2}}-\frac{\partial\left(\mu P\right)}{\partial y}
\]
\end{cor}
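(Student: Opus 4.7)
The plan is to mirror the argument used to prove Theorem \ref{thm:The-Fokker-Planck-equation}, but with the mixed-process chain rule from Corollary \ref{cor:Let--a} in place of the sub-fractional It\^o formula from Lemma \ref{lem: ito}. The key observation is that Corollary \ref{cor:Let--a} already isolates a single effective diffusion coefficient, namely $\tfrac{1}{2}\beta^{2}+\gamma^{2}Ht^{2H-1}(2-2^{2H-1})$, multiplying the second derivative in the $\mathrm{d}t$-part. Once this is in hand, the derivation of the Fokker--Planck equation is a straightforward transcription of the weak-formulation argument.

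Concretely, I would start by picking an arbitrary test function $g=g(y)\in C_{c}^{2}(\mathbb{R})$ and applying Corollary \ref{cor:Let--a} to the process $y_{t}$ defined by (\ref{eq:dytilde}). Taking expectations of the resulting expression, the stochastic differentials $\mathrm{d}B_{t}$ and $\mathrm{d}\xi_{t}^{H}$ drop out (for the Brownian part this is standard; for the sub-fractional part the same vanishing was already used implicitly in Theorem \ref{thm:The-Fokker-Planck-equation}, where integrability and the centered Gaussian nature of $\xi^{H}$ guarantee that the expectation of the stochastic increment is zero for sufficiently regular integrands). This yields
\begin{equation*}
\frac{\mathrm{d}\mathbb{E}[g(y_{t})]}{\mathrm{d}t}=\mathbb{E}\!\left[\mu\,\frac{\partial g}{\partial y}+\left(\tfrac{1}{2}\beta^{2}+\gamma^{2}Ht^{2H-1}(2-2^{2H-1})\right)\sigma^{2}\,\frac{\partial^{2}g}{\partial y^{2}}\right].
\end{equation*}

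Next, I would rewrite both sides in terms of the transition density $P(y,t)$, using $\mathbb{E}[g(y_{t})]=\int g(y)P(y,t)\,\mathrm{d}y$ on the left and expanding the right side as an integral against $P$. Two integrations by parts (one for the drift term, two for the diffusion term, with vanishing boundary contributions thanks to the compact support of $g$) transfer the $y$-derivatives from $g$ onto the coefficient functions, producing
\begin{equation*}
\int_{-\infty}^{\infty}g(y)\!\left\{\frac{\partial P}{\partial t}+\frac{\partial(\mu P)}{\partial y}-\left(\tfrac{1}{2}\beta^{2}+\gamma^{2}Ht^{2H-1}(2-2^{2H-1})\right)\frac{\partial^{2}(\sigma^{2}P)}{\partial y^{2}}\right\}\mathrm{d}y=0.
\end{equation*}
Since $g$ is arbitrary, the bracketed expression must vanish identically, which is exactly the claimed equation.

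The only genuine subtlety, and the step I would expect to require most care, is the vanishing of the expectation of the sub-fractional stochastic integral $\mathbb{E}[\int_{0}^{t}\sigma\,\partial_{y}g\,\mathrm{d}\xi_{s}^{H}]$. This is not a martingale increment in the classical sense because $\xi^{H}$ is neither Markovian nor a semimartingale for $H\neq 1/2$, so one must appeal to the centeredness and the divergence-type construction of the sub-fractional integral (as in \cite{yan2011ito}) to justify it. Apart from this point, which is inherited unchanged from the proof of Theorem \ref{thm:The-Fokker-Planck-equation}, the rest of the derivation is purely mechanical.
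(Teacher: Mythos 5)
Your proposal is correct and follows exactly the route the paper intends: the paper's proof is the one-line remark ``Straightforward from Corollary \ref{cor:Let--a} and Theorem \ref{thm:The-Fokker-Planck-equation},'' and your write-up is precisely the expansion of that remark (apply the mixed chain rule, take expectations, integrate by parts against a test function). Your flagged subtlety about the vanishing expectation of the sub-fractional stochastic integral is a fair observation, but it is exactly the step the paper also takes for granted in Theorem \ref{thm:The-Fokker-Planck-equation}, so nothing new is required here.
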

\begin{proof}
Straightforward from Corollary \ref{cor:Let--a} and Theorem \ref{thm:The-Fokker-Planck-equation}.
\end{proof}

\section{The CEV model}

The constant elasticity of variance model assumes that asset price,
$S$, obeys the following It\^o drift-diffusion (under the risk-neutral
measure):

\begin{equation}
\text{d}S_{t}=rS_{t}\text{d}t+\sigma_{0}S_{t}^{\frac{\alpha}{2}}\text{d}B_{t}\label{eq:CEV}
\end{equation}

\noindent being $r\geq0$ the constant risk-free rate of interest,
$\sigma_{0}>0$ and $\alpha\in[0,2[$ the parameters of the model.
The selection of that interval for $\alpha$ ensure two desired properties
for the model:
\begin{enumerate}
\item The capacity to address the `leverage effect' and the skew phenomena
\cite{eltit}.
\item The arbitrage possibilities are neglected (existence of an unique
equivalent martingale measure) \cite{delbaen2002note}. 
\end{enumerate}
Defining $x=S^{2-\alpha}$, the It\^o's lemma yields to:

\[
\mathrm{d}x_{t}=\left(2-\alpha\right)\left[rx_{t}+\frac{1}{2}\left(1-\alpha\right)\sigma_{0}^{2}\right]\mathrm{d}t+\left(2-\alpha\right)\sigma_{0}\sqrt{x_{t}}\mathrm{d}B_{t}
\]

Later the transition density probability function $P(x_{T},T|x_{0},0)$,
follows the Fokker-Planck equation:

\begin{equation}
\frac{\partial P}{\partial t}=\frac{1}{2}\frac{\partial^{2}}{\partial x^{2}}\left[\left(2-\alpha\right)^{2}\sigma_{0}^{2}xP\right]-\frac{\partial}{\partial x}\left[\left(2-\alpha\right)\left(rx+\frac{1}{2}\left(1-\alpha\right)\sigma_{0}^{2}\right)P\right]\label{eq:FPCEV}
\end{equation}

The constant-coefficients parabolic equation (\ref{eq:FPCEV}) is
solved using the arguments supplied by Feller \cite{feller1951two}
(Feller's lemma, see also \cite{hsu2008constant}), where the transition
density in terms of the original variables is:

\[
P\left(\left.S_{T},T\right|S_{0},0\right)=\left(2-\alpha\right)k^{\frac{1}{2-\alpha}}\left(yw^{1-2\alpha}\right)^{\frac{1}{2\left(2-\alpha\right)}}\text{e}^{-y-w}I_{1/\left(2-\alpha\right)}\left(2\sqrt{yw}\right)
\]

\noindent with:

\begin{eqnarray}
k & = & \frac{2r}{\sigma_{0}^{2}\left(2-\alpha\right)\left[\text{e}^{r\left(2-\alpha\right)T}-1\right]},\label{eq:k}\\
y & = & kS_{0}^{2-\alpha}\text{e}^{r\left(2-\alpha\right)T},\label{eq:y}\\
w & = & kS_{T}^{2-\alpha}.\label{eq:subst}
\end{eqnarray}

Finally, the European Call option pricing, with maturity $T$ and
strike $E$, is obtained taking the expectation of the discounted
payoff, arriving to \cite{schroder1989computing}:

\begin{eqnarray*}
C\left(S_{0}\right) & = & \text{e}^{-rT}\int_{-\infty}^{\infty}\max\left\{ S_{T}-E,0\right\} P\left(\left.S_{T},T\right|S_{0},0\right)\text{d}S_{T}\\
 & = & S_{0}Q\left(2z;2+\frac{2}{2-\alpha},2y\right)-E\text{e}^{-rT}\left[1-Q\left(2y;\frac{2}{2-\alpha},2z\right)\right]
\end{eqnarray*}

\noindent being $Q\left(\bullet\right)$ the non-central chi-squared
complementary distribution function and $z=kE^{2-\alpha}$.

\section{The CEV model driven by a mixed sub-fractional Brownian motion\label{sec:The-msfCEV-model}}

Now, we consider that the asset price $S$ is ruled by following stochastic
differential equation:

\begin{equation}
\mathrm{d}S_{t}=rS_{t}\mathrm{d}t+\sigma S^{\frac{\alpha}{2}}\mathrm{d}M_{t}^{\beta,\gamma,H}\label{eq:SCEV}
\end{equation}
\noindent where the same conditions on the parameters as in the process
(\ref{eq:CEV}) are applied here. In addition, we restrict $\frac{1}{2}<H<1$.

Under the change of variable $x=S^{2-\alpha}$, and using the Corollary
(\ref{cor:Let--a}), the SDE (\ref{eq:SCEV}) goes to:

\begin{equation}
\mathrm{d}x_{t}=\left(2-\alpha\right)\left\{ rx_{t}+\left[\frac{1}{2}\beta^{2}+\gamma^{2}Ht^{2H-1}\left(2-2^{2H-1}\right)\right]\right\} \mathrm{d}t+\left(2-\alpha\right)\sigma\sqrt{x_{t}}\mathrm{d}M_{t}^{\beta,\gamma,H}\label{eq:dx}
\end{equation}

The evolution from $x(t=0)=x_{0}$ to $x(t=T)=x_{T}$ is given by
the transition probability density function $P=P(x_{T},T|x_{0},0)$.
Since By the Corollary (\ref{cor:The-Fokker-Planck-equation}), we
have:

\begin{multline}
\frac{\partial P}{\partial t}=\frac{\partial}{\partial x^{2}}\left\{ \left[\frac{1}{2}\beta^{2}+\gamma^{2}Ht^{2H-1}\left(2-2^{2H-1}\right)\right]\left(2-\alpha\right)^{2}\sigma^{2}xP\right\} \\
-\frac{\partial}{\partial x}\left\{ \left[rx+\left(\frac{1}{2}\beta^{2}+\gamma^{2}Ht^{2H-1}\left(2-2^{2H-1}\right)\right)\left(1-\alpha\right)\sigma^{2}\right]\left(2-\alpha\right)P\right\} \label{eq:FP}
\end{multline}

\begin{thm}
The transition probability density function related to the process
(\ref{eq:SCEV}) is written as:

\begin{eqnarray}
P\left(\left.S_{T},T\right|S_{0},0\right) & = & \left(2-\alpha\right)k_{s}^{\frac{1}{2-\alpha}}\left(y_{s}w_{s}^{1-2\alpha}\right)^{\frac{1}{2\left(2-\alpha\right)}}\text{e}^{-y-w}I_{1/\left(2-\alpha\right)}\left(2\sqrt{y_{s}w_{s}}\right)\label{eq:PH-1}
\end{eqnarray}
\\

\noindent with:

\begin{eqnarray*}
k_{s} & = & \left[\Phi\left(-\left(2-\alpha\right)rT\right)\right]^{-1},\\
y_{s} & = & k_{S}S_{0}^{2-\alpha}\text{e}^{r\left(2-\alpha\right)T}\\
w_{s} & = & k_{S}S_{T}^{2-\alpha}
\end{eqnarray*}

\noindent and

\begin{eqnarray*}
\Phi(T) & = & \gamma^{2}\sigma^{2}\frac{\left(2-\alpha\right)^{2}}{2H+1}\left(T\right)^{2H}\left(1-2^{2H-2}\right)\left\{ 2H+1+\text{e}^{\frac{1}{2}\left(2-\alpha\right)rT}\left[\left(2-\alpha\right)rT\right]^{-H}M_{H,H+1/2}\left[\left(2-\alpha\right)rT\right]\right\} \\
 &  & +\beta^{2}\frac{\sigma^{2}}{2r}\left(2-\alpha\right)\left(\text{e}^{\left(2-\alpha\right)rT}-1\right)
\end{eqnarray*}
\end{thm}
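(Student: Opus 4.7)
The structural observation is that the Fokker--Planck equation (\ref{eq:FP}) coincides with the classical CEV equation (\ref{eq:FPCEV}) except that the constant $\sigma_{0}^{2}/2$ is replaced by the time-dependent coefficient $\theta(t)\sigma^{2}$, where $\theta(t):=\tfrac{1}{2}\beta^{2}+\gamma^{2}Ht^{2H-1}(2-2^{2H-1})$. I therefore expect the transition density to retain its non-central chi-squared shape, with the normalising constant $k$ promoted to an effective-time functional of $\theta$; all the work will then amount to identifying this functional with $\Phi(T)^{-1}$.

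First I would remove the linear drift by setting $X_{t}=e^{-(2-\alpha)rt}x_{t}$: Corollary \ref{cor:Let--a} yields a driftless but time-inhomogeneous diffusion with noise coefficient $(2-\alpha)\sigma\sqrt{2\theta(t)}\,e^{-(2-\alpha)rt/2}\sqrt{X_{t}}$. Then a Dambis--Dubins--Schwarz time change with $\tau'(t)=\tfrac{1}{2}(2-\alpha)^{2}\sigma^{2}\theta(t)e^{-(2-\alpha)rt}$ turns $X$ into a squared Bessel process of dimension $\delta=2(1-\alpha)/(2-\alpha)$ (so $|\nu|=1/(2-\alpha)$), whose transition density is the classical modified-Bessel / non-central chi-squared expression. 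Reverting $x=S^{2-\alpha}$ and matching the exponential exponents in (\ref{eq:PH-1}) forces $k_{s}=e^{-r(2-\alpha)T}/(2\tau(T))$, i.e.
\[
\frac{1}{k_{s}}=(2-\alpha)^{2}\sigma^{2}\int_{0}^{T}\theta(s)\,e^{r(2-\alpha)(T-s)}\,\text{d}s\,=:\,\Phi(T).
\]
An equivalent, more analytic route is to apply Feller's Laplace transform in $x$ directly to (\ref{eq:FP}); the result is a first-order linear PDE in $(s,t)$ solvable by characteristics, and the same effective time $\Phi(T)$ emerges after inversion.

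It remains to write $\Phi$ in closed form. Splitting $\theta$ into its $\beta^{2}$ and $\gamma^{2}$ pieces, the $\beta^{2}$-part is elementary and reproduces the last line of the stated $\Phi$ directly. For the $\gamma^{2}$-part, let $z:=r(2-\alpha)T$; the substitution $u=r(2-\alpha)s$ and integration by parts give
\[
\int_{0}^{T}s^{2H-1}e^{r(2-\alpha)(T-s)}\,\text{d}s=\frac{T^{2H}e^{z}}{z^{2H}}\,\gamma(2H,z),\qquad 2H\,\gamma(2H,z)=\gamma(2H+1,z)+z^{2H}e^{-z},
\]
where $\gamma(\cdot,\cdot)$ denotes the lower incomplete gamma function (clashing with, but distinct from, the model parameter $\gamma$). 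The identity $M_{H,H+1/2}(z)=(2H+1)z^{-H}e^{z/2}\gamma(2H+1,z)$, which follows from the Euler integral ${}_{1}F_{1}(1;2H+2;z)=(2H+1)\int_{0}^{1}(1-t)^{2H}e^{zt}\,\text{d}t$ together with the standard link $M_{\kappa,\mu}(z)=z^{\mu+1/2}e^{-z/2}{}_{1}F_{1}(\mu-\kappa+\tfrac{1}{2};2\mu+1;z)$, then converts the two displayed identities into the Whittaker line of $\Phi$. The algebraic rewriting $H(2-2^{2H-1})=2H(1-2^{2H-2})$ supplies the prefactor $(1-2^{2H-2})T^{2H}/(2H+1)$.

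The main obstacles are, first, the special-function bookkeeping in the last step: one must recognise the specific integral representation that yields precisely $M_{H,H+1/2}$ rather than one of its near-neighbours such as $M_{H-1/2,H}$, which differ by factors of $z^{\pm H}$ and $e^{\pm z/2}$. Second, the boundary behaviour at $x=0$ (entrance/exit/regular depending on whether $\alpha<1$ or $1\leq\alpha<2$) must be handled consistently with the Bessel-process identification in the reduction step and with the Laplace inversion in the alternative route; the strict positivity $\theta(t)>0$, guaranteed by $\beta\neq 0$ or by $H>\tfrac{1}{2}$ (so $2-2^{2H-1}>0$), ensures that the time change is monotone and the reduction admissible.
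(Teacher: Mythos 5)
Your proposal is correct and, at its core, takes the same route as the paper: reduce the Fokker--Planck equation (\ref{eq:FP}) to the classical Feller/CEV form with the constant $\sigma_{0}^{2}/2$ replaced by the time-dependent coefficient $\theta(t)\sigma^{2}$, observe that the solution keeps the noncentral chi-squared/Bessel shape with $k$ promoted to the reciprocal of an effective-time functional, and identify that functional as $\Phi(T)=(2-\alpha)^{2}\sigma^{2}\int_{0}^{T}\theta(s)\,\mathrm{e}^{r(2-\alpha)(T-s)}\,\mathrm{d}s$ --- which is exactly the paper's $\phi\left(-(2-\alpha)rT\right)$ after the substitution $s\mapsto -(2-\alpha)rs$. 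The differences are in packaging. The paper rescales time as $\tau=-(2-\alpha)rt$, notes that the ratio $a(\tau)/b(\tau)$ of the resulting coefficients is time-independent, and then invokes Feller's lemma with time-varying coefficients as a cited black box \cite{masoliver2016nonstationary,araneda2020fractional}; you instead derive the effective time yourself, either by a drift removal plus time change to a squared Bessel process or by Feller's Laplace transform and characteristics. The second of these is the one squarely consistent with the paper's framework: the Dambis--Dubins--Schwarz step is only formal here, since the driver $M^{\beta,\gamma,H}$ is not a martingale in general (and is only equivalent in law to $\beta B_{t}$ when $\beta\neq0$ and $H>3/4$, by Lemma \ref{lem:-The-mixed}), whereas the paper --- like your Laplace route --- works entirely at the level of the PDE obtained from the sub-fractional It\^o rule. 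Your closed-form evaluation of the $\gamma^{2}$-integral via the incomplete gamma recurrence and the identity $M_{H,H+1/2}(z)=(2H+1)z^{-H}\mathrm{e}^{z/2}\gamma(2H+1,z)$ is correct and reproduces the stated $\Phi$ exactly; this is a step the paper states without derivation, so your write-up actually supplies detail the published proof omits.
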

\begin{proof}
Defining the rescaled time $\tau=-\left(2-\alpha\right)rt$, the relation
(\ref{eq:FP}) becomes:

\[
\frac{\partial P}{\partial\tau}=\frac{\partial^{2}}{\partial x^{2}}\left[a(\tau)xP\right]+\frac{\partial}{\partial x}\left[\left(x+b\left(\tau\right)\right)P\right]
\]
\\

\noindent with, 

\begin{eqnarray*}
a(\tau) & = & -\frac{\sigma^{2}}{r}\left(2-\alpha\right)\left[\frac{1}{2}\beta^{2}+\gamma^{2}\left(1-2^{2H-2}\right)2H\left(-\frac{\tau}{\left(2-\alpha\right)r}\right)^{2H-1}\right]\\
b(\tau) & = & -\frac{\sigma^{2}}{r}\left(1-\alpha\right)\left[\frac{1}{2}\beta^{2}+\left(1-2^{2H-2}\right)2H\left(-\frac{\tau}{\left(2-\alpha\right)r}\right)^{2H-1}\right]
\end{eqnarray*}
\\

Since the ratio $a(\tau)/b(\tau)$ is time-independent, the PDE (\ref{eq:FP})
could be solved using the Feller's lemma with time varying coefficients
\cite{masoliver2016nonstationary,araneda2020fractional}. Thus:

\[
P\left(\left.x,\tau\right|x_{0},0\right)=\frac{1}{\phi(\tau)}\left(\frac{x\text{e}^{\tau}}{x_{0}}\right)^{{\textstyle {\displaystyle \frac{b-a}{2a}}}}\exp\left[-\frac{\left(x+x_{0}\text{e}^{-\tau}\right)}{\phi(\tau)}\right]I_{1-b/a}\left[\frac{2}{\phi(\tau)}\sqrt{\text{e}^{-\tau}x_{0}x}\right]
\]
\\
\noindent where $I_{\nu}$ is the modified Bessel function of first
kind of order $\nu$, and $\phi$ is defined by:

\begin{eqnarray*}
\phi(\tau) & = & -\frac{\sigma^{2}}{r}\left(2-\alpha\right)\int_{0}^{\tau}\left[\frac{1}{2}\beta^{2}+2\gamma^{2}\left(1-2^{2H-2}\right)H\left(\frac{s-\tau}{\left(2-\alpha\right)r}\right){}^{2H-1}\right]\text{e}^{-s}\text{d}s\\
 & = & \gamma^{2}\sigma^{2}\frac{\left(2-\alpha\right)^{2}}{2H+1}\left(-\frac{\tau}{b}\right)^{2H}\left(1-2^{2H-2}\right)\left[2H+1+\text{e}^{-\frac{1}{2}\tau}\left(-\tau\right)^{-H}M_{H,H+1/2}\left(-\tau\right)\right]\\
 &  & +\beta^{2}\sigma^{2}\frac{\text{\ensuremath{\left(2-\alpha\right)}}}{r}\left(\text{e}^{-\tau}-1\right)
\end{eqnarray*}
\\

\noindent being $M_{\kappa,\upsilon}\left(l\right)$ the M-Whittaker
function \cite{NIST}.

Later, in terms of the original variables $(S,t)$, the transition
probability density related to the process (\ref{eq:SCEV}) is written
as:

\begin{eqnarray}
P\left(\left.S_{T},T\right|S_{0},0\right) & = & \left(2-\alpha\right)k_{s}^{\frac{1}{2-\alpha}}\left(y_{s}w_{s}^{1-2\alpha}\right)^{\frac{1}{2\left(2-\alpha\right)}}\text{e}^{-y-w}I_{1/\left(2-\alpha\right)}\left(2\sqrt{y_{s}w_{s}}\right)\label{eq:PH}
\end{eqnarray}
\\

\noindent with:

\begin{eqnarray*}
y_{s} & = & k_{S}S_{0}^{2-\alpha}\text{e}^{r\left(2-\alpha\right)T},\\
w_{s} & = & k_{S}S_{T}^{2-\alpha}\\
k_{s} & = & \left[\phi\left(-\left(2-\alpha\right)rT\right)\right]^{-1}\\
 & = & \left[\Phi\left(T\right)\right]^{-1}
\end{eqnarray*}
\\
\end{proof}
\begin{thm}
Defining \textup{$z_{s}(t)=k_{s}(t)E^{2-\alpha}$}, The European Call
price under the sub-fractional CEV is given by:

\begin{eqnarray*}
C_{H}\left(S_{0},0\right) & = & \text{e}^{-rT}\int_{-\infty}^{\infty}\max\left\{ S_{T}-E,0\right\} P\left(\left.S_{T},T\right|S_{0},0\right)\text{d}S_{T}\\
 & = & S_{0}Q\left(2z_{s};2+\frac{2}{2-\alpha},2y_{s}\right)-E\text{e}^{-rT}\left[1-Q\left(2y_{s};\frac{2}{2-\alpha},2z_{s}\right)\right]
\end{eqnarray*}
\end{thm}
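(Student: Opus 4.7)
The plan is to follow Schroder's (1989) derivation of the classical CEV call price line by line. The crucial observation is that the transition density in (\ref{eq:PH}) has exactly the same functional dependence on $S_T$ as the standard CEV density, the only change being that the constant $k$ of (\ref{eq:k}) is replaced by $k_s = [\Phi(T)]^{-1}$. Since the pricing integral uses the density only through its shape in $S_T$, no additional stochastic analysis is required, and the computation reduces to a change of variable followed by a standard Bessel-to-chi-squared identity.

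Concretely, I would split the discounted expectation as $C_H = I_1 - I_2$ with
\[
I_1 = e^{-rT}\int_E^{\infty} S_T\, P(S_T,T\mid S_0,0)\, dS_T, \qquad I_2 = E e^{-rT}\int_E^{\infty} P(S_T,T\mid S_0,0)\, dS_T,
\]
and then perform the change of variable $w_s = k_s S_T^{2-\alpha}$, which maps $[E,\infty)$ to $[z_s,\infty)$ and turns each integrand into a multiple of $w_s^{a}\, e^{-w_s-y_s}\, I_{1/(2-\alpha)}(2\sqrt{y_s w_s})$, with exponents $a$ that differ by one between the two integrals because of the extra factor $S_T$ in $I_1$. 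Each resulting integral is then identified as a tail probability of a non-central chi-squared distribution via the classical representation
\[
Q(v;\nu,\lambda) = \int_v^{\infty} \tfrac{1}{2}\, e^{-(u+\lambda)/2}\, (u/\lambda)^{(\nu-2)/4}\, I_{(\nu-2)/2}\!\bigl(\sqrt{u\lambda}\bigr)\, du,
\]
together with the symmetry relation for the non-central chi-squared distribution that interchanges the argument and the non-centrality parameter while shifting the degrees of freedom by $\pm 2$ (the identity underlying Schroder's formula).

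Applied to $I_2$ in its direct form, this identity produces $1 - Q(2y_s;\, 2/(2-\alpha),\, 2z_s)$; applied to $I_1$ in its shifted form, together with the cancellation between the prefactor $e^{-rT}$ and the $e^{r(2-\alpha)T}$ hidden inside $y_s$, it yields $S_0\, Q(2z_s;\, 2+2/(2-\alpha),\, 2y_s)$. Summing these with the appropriate signs gives the stated formula.

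The only delicate part of the argument is bookkeeping: one has to track carefully the exponents of $y_s$ and $w_s$, the integer vs.\ half-integer shifts of the Bessel index, and the exponential prefactors, so that the two $Q$-terms come out with the advertised degrees of freedom and non-centralities and with the expected swap $y_s \leftrightarrow z_s$. Because the entire effect of the msfBm has been absorbed into $k_s$ inside the transition density, no fundamentally new argument beyond Schroder's is needed, and the statement follows essentially by reparametrization of the classical result.
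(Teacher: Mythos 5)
Your proposal is correct and is essentially the argument the paper intends: the paper's own ``proof'' is only a pointer to \cite{araneda2020fractional}, which in turn reduces to Schroder's classical computation once one observes, as you do, that the msfBm enters the transition density only through the replacement $k\mapsto k_{s}=[\Phi(T)]^{-1}$, leaving the functional form in $S_{T}$ unchanged. The one point genuinely worth spelling out --- that the Bessel-to-noncentral-$\chi^{2}$ identities are insensitive to the particular value of $k_{s}$ because the factor $\text{e}^{r(2-\alpha)T}$ inside $y_{s}$ supplies exactly the $\text{e}^{rT}$ needed to cancel the discount in the $S_{0}Q(\cdot)$ term --- is correctly identified in your argument.
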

\begin{proof}
See \cite{araneda2020fractional} and the arguments supplied there.
\end{proof}
Fig. \ref{fig:Price} shows the price of an at-the-money European
Call option under the mixed sub-fractional CEV model with $\gamma=\beta=1$
(blue), in function of the elasticity parameter, for short (3 months,
\ref{fig:=00003D0.25}) and long (2 years, \ref{fig:T=00003D2}) maturities,
with $H=\left\{ 0.5,0.7,0.9\right\} $. Besides, the price under the
mixed fractional CEV (mfCEV) is also drawn (red) by way of comparison.
In all the cases, the sub-fractional pricing is lower than the fractional
one for a fix $H>0.5$. When the Hurst exponent is equal to 0.5, both
models fit with the standard CEV. For $\alpha=2$, the sub-fractional
(fractional) CEV converges to the sub-fractional (fractional) Black-Scholes.

\begin{figure}
\subfloat[$T$=0.25\label{fig:=00003D0.25}]%
{\includegraphics[viewport=30bp 0bp 510bp 420bp,clip,width=0.5\textwidth]{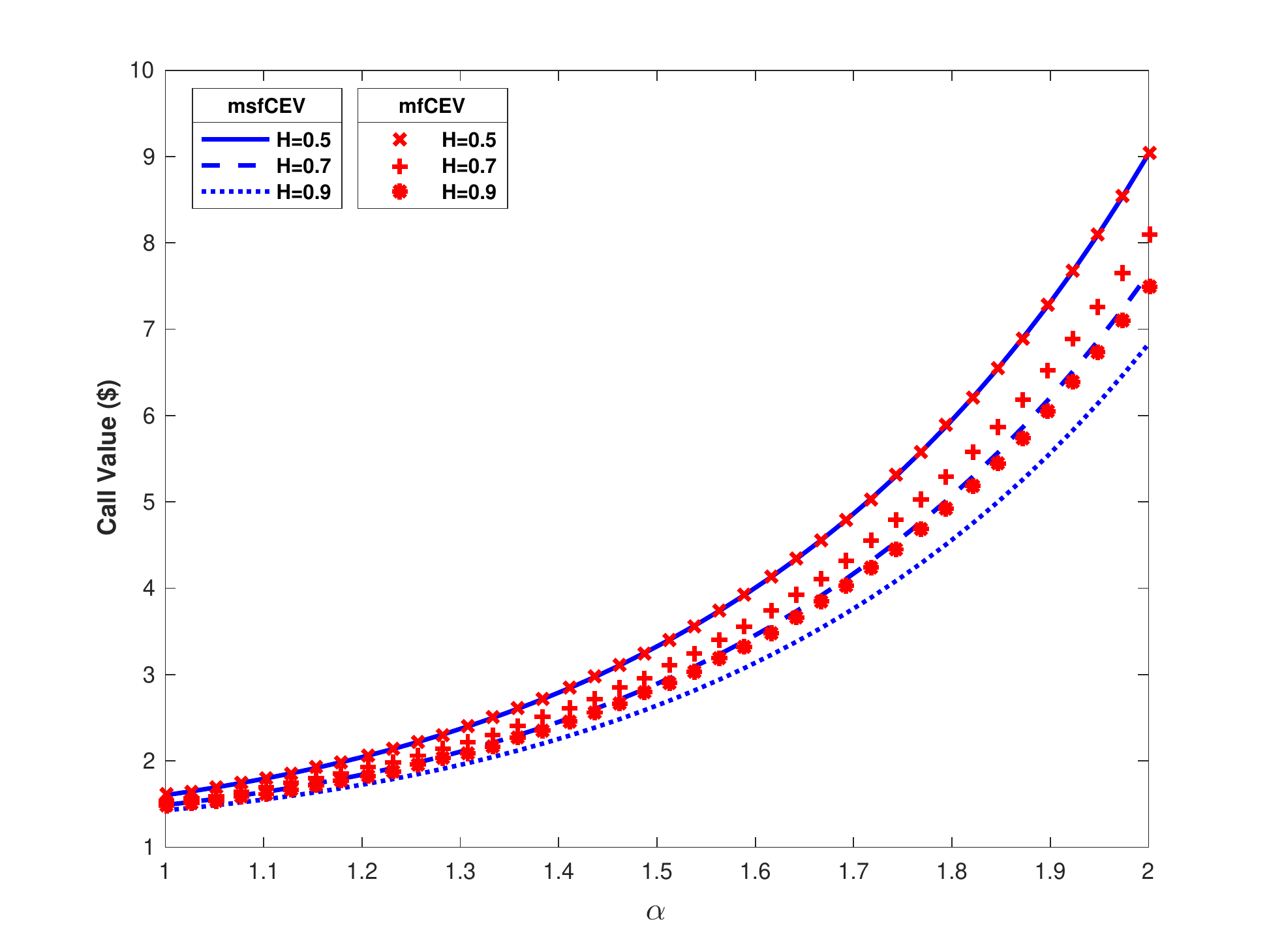}

}%
\subfloat[T=2\label{fig:T=00003D2}]%
{\includegraphics[viewport=30bp 0bp 510bp 420bp,clip,width=0.5\textwidth]{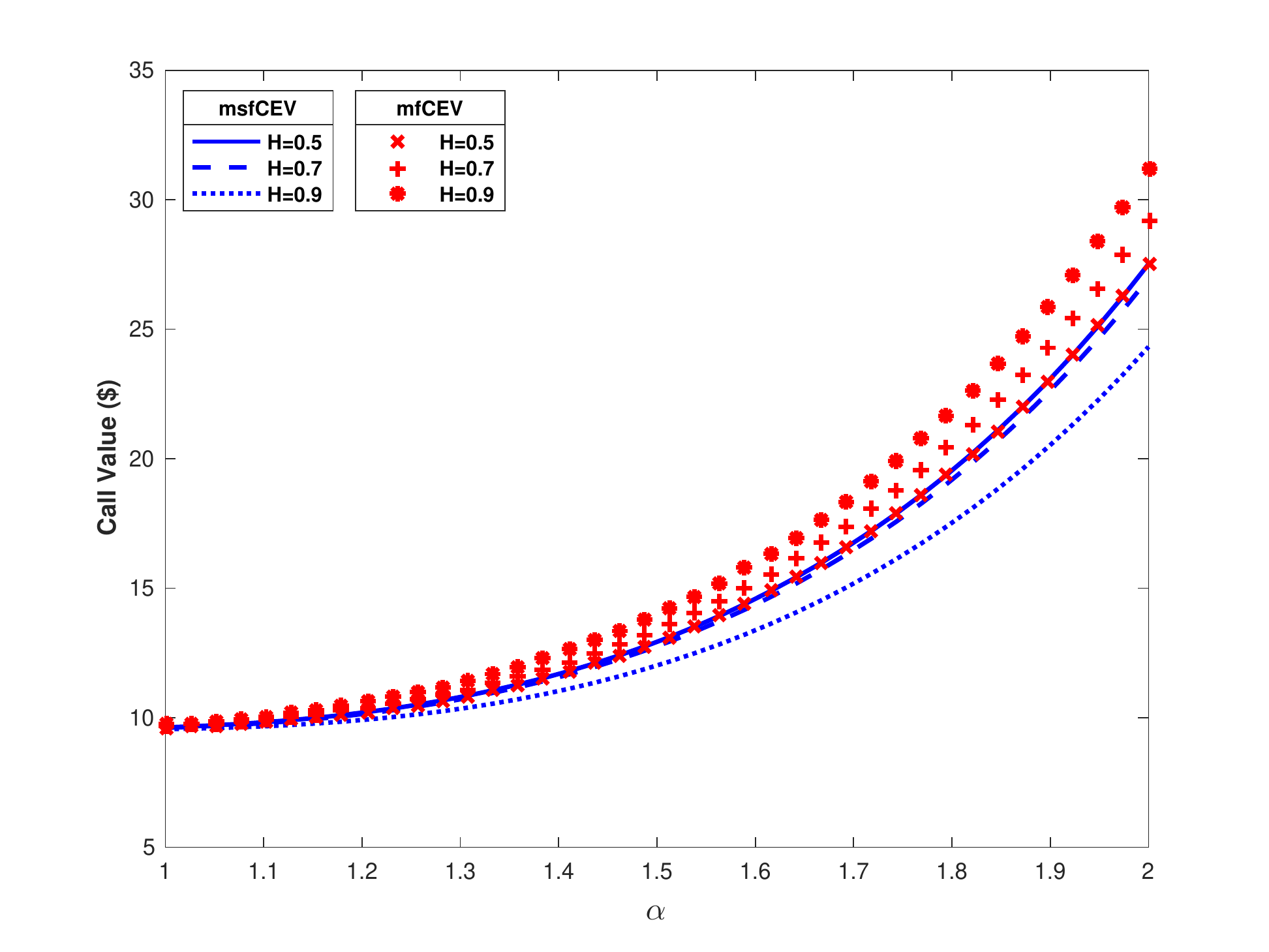}

}

\caption{Price of a European Call option under both fractional and sub-fractional
CEV model as a function of the elasticity ($\alpha)$ and the Hurst
exponent ($H)$. We fix $\sigma=30\%,$ $S_{0}=E=100$ and $r=5$\%.
\label{fig:Price}}
\end{figure}

\section{Numerical Results}

In this section, we test empirically the proposed msfCEV option pricing
model using real option prices data. As an example, use the historical
prices of call options written the S\&P 500. We consider out-of-the-money
and slightly -- at most 5\% -- in-the-money call options with all
available strikes and maturities. Data have been downloaded from OptionMetrics
via Wharton Research Data Services \cite{WRDS}. To put the results
in context, we fit the msfCEV model to option pricing data, but also
we do the same for the standard CEV, mfCEV models, and Black-Scholes
counterparts (Classical, mixed fractional, and mixed sub-fractional);
to compare the performance of each approach. For both mixed fractional
and mixed sub-fractional cases, we have used $\beta=\gamma=1.$

For the empirical analysis we have considered two ways. Fitting jointly,
where the optimal parameters (which minimize the mean squared error
between observed mid-prices and theoretical prices) of the model are
found simultaneously for all the maturities in the data set. On the
other hand, fitting separately, where the optimal parameters are found
for each maturity individually.

Fig. \ref{fig:Performance,-through-the} shows the mean squared errors
(MSE) of the two fitting approaches as a function of the time to maturity
using call prices quoted on January 4, 2010; for both the BS and CEV
model using their classical, mixed fractional, and mixed sub- fractional
versions. First, it\textquoteright s clear that, for the same type
of fit, the CEV approaches outperform the BS counterparts. Second,
when fitting jointly on all maturities, the fractional extensions
outperform the traditional models: the total MSE (summed over all
the maturities) for the CEV is 14.3, while for mfCEV and msfCEV they
are 2.03 and 2.02 respectively. On the BS side, the total MSEs are
18.5 for the classical BS, 11.0 for the mfBS, and 11.1 for the msfBS.
On the other hand, when fitting each maturity separately, we observed
similar performances in the three CEV (total MSE: 1.55 CEV, 1.55 mfCEV,
1.56 msfCEV) and the three BS approaches (albeit at a much higher
total MSE of 10.5). Similar results have also been observed on option
price data quoted on several other days between 2000 and 2019 (not
shown).

In all cases, the mixed fractional and sub-fractional models never
performed worse than the classical ones, often substantially outperforming
them when using the same set of parameters across different strikes
and maturities, i.e. when fitting jointly. Thus, our results confirm
that the CEV model improves the BS model by capturing the volatility
smile across different strikes. Similarly, the proposed msfCEV model
further improves the CEV model by capturing the temporal structure
of option prices across different maturities. Yet, msfCEV and mfCEV
appear very similar in this respect.

\begin{figure}
\includegraphics[width=1\textwidth]{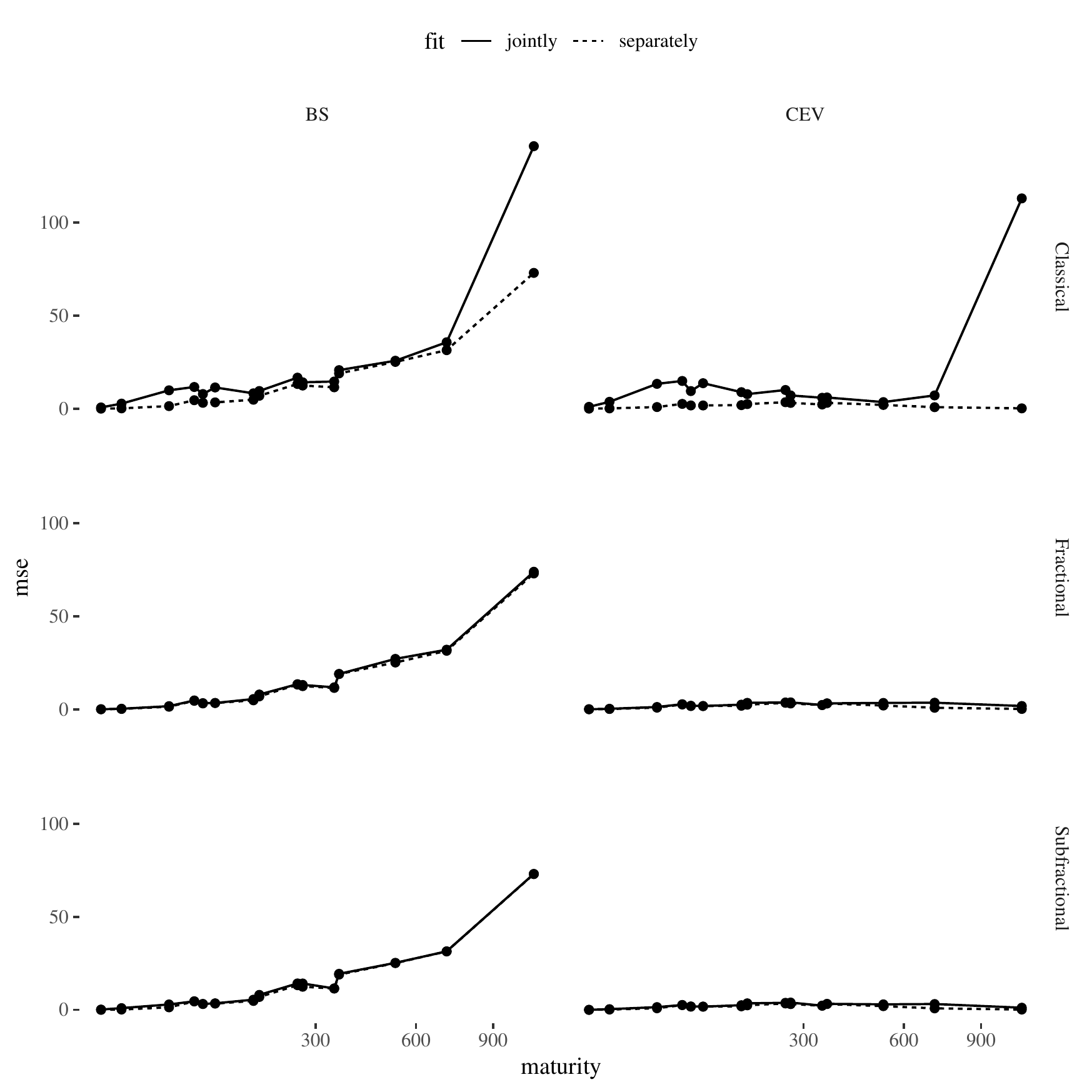}

\caption{Performance, through the mean squared error, under the classical (top),
mixed fractional (middle), and mixed sub-fractional (bottom) approaches
for both Black-Scholes (left) and CEV (right) option pricing models.
\label{fig:Performance,-through-the}}

\end{figure}

\section{Summary}

In this note, we consider the Constant Elasticity of Variance model
driven by a mixed sub-fractional Brownian motion, which allows us
to model financial time-series which present features as a long-range
dependency, non-stationarity increments, leverage effect, and in the
case of the options, the volatility skew pattern. Using sub-fractional
It\^o rules, we drive the Fokker-Planck equation related to the mixed-fractional
CEV process. After getting analytically the transition density function,
the price of a European Call option is obtained in terms of the M-Whittaker
function and the non-central chi-squared complementary function. Empirical
fits on market prices of call options on the S\&P 500 show a clear
improvement over the standard CEV model. Especially the temporal structure
across different maturities is much better captured by the (sub-)fractional
model.

\section{Acknowledgments}

A. Araneda is supported from Operational Programme Research, Development
and Education - Project ``Postdoc2MUNI'' (No. CZ.02.2.69/0.0/0.0/18\_053/0016952).
N. Bertschinger thanks Dr. h. c. Maucher for funding his position. 

\bibliographystyle{unsrt}
\bibliography{15_Users_axelaraneda_Desktop_Research_fBM_Subfractional}

\end{document}